\title{\Large\bf A quantum mechanical version of Price's theorem
for Gaussian states}
\author{Igor G. Vladimirov
\thanks{This work is supported by the Australian Research Council.
The author is with UNSW Canberra, ACT 2600, Australia. E-mail: {\tt\small  igor.g.vladimirov@gmail.com}.}
}
\newtheorem{thm}{Theorem}
\def\fa{\mathfrak{a}}
\def\<{\leqslant}           
\def\>{\geqslant}           
\def\d{\partial}
\def\wt{\widetilde}
\def\Re{\mathrm{Re}}   
\def\Im{\mathrm{Im}}   
\def\cH{\mathcal{H}}   
\def\mA{\mathbb{A}}    
\def\mZ{\mathbb{Z}}    
\def\mR{\mathbb{R}}    
\def\mC{\mathbb{C}}    
\def\Tr{\mathrm{Tr}}       
\def\rT{\mathrm{T}}        
\def\bE{\mathbf{E}}    
\def\rprod{\mathop{\overrightarrow{\prod}}}
\def\[[[{[\![\![}
\def\]]]{]\!]\!]}
\def\bra{{\langle}}
\def\ket{{\rangle}}
\def\Bra{\left\langle}
\def\Ket{\right\rangle}
\def\re{\mathrm{e}}        
\def\rd{\mathrm{d}}        
\def\bJ{\mathbf{J}}
\def\x{\times}
\def\ox{\otimes}
\def\cI{\mathcal{I}}
\def\cov{\mathbf{cov}}
\def\mS{\mathbb{S}}
\def\mZ{\mathbb{Z}}
\def\veps{\varepsilon}
\begin{document}

\maketitle
\thispagestyle{empty}
\pagestyle{plain}






\begin{abstract}
This paper is concerned with integro-differential identities which are known in statistical signal processing as Price's theorem for expectations of nonlinear functions of jointly Gaussian random variables. We revisit these relations for classical variables by using the  Frechet differentiation with respect to covariance matrices, and then show that Price's theorem carries over to a quantum mechanical setting. The  quantum counterpart of the theorem is established for Gaussian quantum states in the framework of the Weyl functional calculus for quantum variables satisfying the Heisenberg canonical commutation relations. The quantum mechanical version of Price's theorem relates the Frechet derivative of the generalized moment of such variables with respect to the real part of their quantum covariance matrix  with other moments. As an illustrative example, we consider these relations for quadratic-exponential moments which are relevant to risk-sensitive quantum control.
\end{abstract}
\begin{keywords}
Price's theorem, quantum variables, canonical commutation relations, Weyl quantization, Gaussian quantum state, generalized moment,
integro-differential identity, quadratic-exponential moment. 
\end{keywords}

\section{INTRODUCTION}\label{sec:intro}

It is well known that Gaussian probability density functions (PDFs) provide fundamental solutions of the heat (or diffusion) equation for homogeneous media \cite{E_1998,V_1971} or the more general Fokker-Planck-Kolmogorov equation \cite{S_2008} for linear stochastic systems driven by a Wiener process. This connection between Gaussian PDFs and  linear second-order partial differential equations (PDEs) 
is a source of various integro-differential identities.  In statistical signal processing,  such relations attracted attention more than fifty years ago
in the context of evaluating the generalized moments (that is, expectations of arbitrary nonlinear functions) of Gaussian random variables and became known under the generic name of Price's theorem \cite{P_1958}; see also \cite{B_1967,M_1964,P_1965,V_1999}.

The identities, which constitute  Price's theorem, relate the derivatives of the generalized moments with respect to the covariances of the Gaussian random variables with the expectations of the second-order derivatives of the nonlinear function. The latter moments can be easier to compute (for example, in the case of polynomials). Moreover, if the function satisfies a linear PDE with constant coefficients, then a ``dual'' PDE can be derived for the corresponding generalized moment. This, in principle,   allows the moment to be computed (or the structure of its parameter dependence to be found) by solving a boundary value problem for the dual PDE, where the boundary conditions are obtained by using extreme values of the parameters of the Gaussian distribution, for which the moment is amenable to direct calculation.

A similar problem of averaging nonlinear functions of quantum variables often arises in the context of quantum stochastic systems \cite{P_1992}.  These are models of open  dynamical systems with noncommutative variables, which evolve in time and interact  with their environment according to the laws of quantum mechanics \cite{S_1994}. For example, Gaussian stochastic linearization \cite{VP_2012a} of such systems, with dynamic variables satisfying   the Heisenberg   canonical commutation relations (CCRs) \cite{M_1998}, employs mixed moments of the system variables and the Hamiltonian operators   over Gaussian quantum states \cite{P_2010}. Furthermore, the performance criteria in quantum formulations of risk-sensitive dissipativity analysis and filtering/control design \cite{DDJW_2006,J_2005,VP_2012b,YB_2009} are organised as exponential moments of quadratic forms of quantum variables and are concerned with Gaussian quantum states in the case of linear systems \cite{P_2010}. 

A straightforward extension of Price's theorem to the quantum mechanical setting is complicated by the nontrivial problem of evaluating a nonlinear function (of several classical variables) at noncommutative quantum variables. In fact, such evaluation can be carried out in different ways, leading to different results which depend on additional conventions on ordering of the variables in their products, such as Wick's normal order and  the related Kohn-Nirenberg calculus \cite{F_1989}.

In the present paper, we show that a quantum analogue of Price's theorem  can be established in the framework of the Weyl functional calculus \cite{F_1989}, which extends classical functions to quantum variables (satisfying the Heisenberg CCRs) by using the Fourier transforms. More precisely, the quantum mechanical derivation combines the Fourier transforms with the quantum quasi-characteristic functions \cite{CH_1971} of Gaussian quantum states. This is similar to  the role of characteristic functions in one of the  proofs of the classical version of Price's theorem in \cite{B_1967,V_1999}. The resulting quantum version of Price's theorem extends its particular cases which are known in quantum optics (where they are usually  formulated in terms of the annihilation and creation operators); see, for example, \cite{GZ_2004,RA_1978} and references therein.

The paper is organised as follows. In order to make the exposition self-contained,  Section~\ref{sec:classical} revisits Price's classical theorem by using the Frechet differentiation with respect to covariance matrices. Section~\ref{sec:quantum} outlines the Weyl quantization and establishes the quantum version of Price's theorem for Gaussian states. Section~\ref{sec:examples} provides an illustrative example which considers
the quadratic-exponential moments. 
Section~\ref{sec:conc} makes concluding remarks.

\section{PRICE'S THEOREM FOR CLASSICAL GAUSSIAN RANDOM VARIABLES}\label{sec:classical}

Suppose $X:= (X_k)_{1\< k \< n}$ is an $\mR^n$-valued Gaussian random vector  with mean $\mu:= \bE X$ and a positive definite covariance matrix $\Sigma:= \cov(X) := \bE(XX^{\rT}) - \mu\mu^{\rT}$, where $\bE(\cdot)$ denotes expectation, and $(\cdot)^{\rT}$ is the transpose.   Unless indicated otherwise, vectors are organised as columns. Consider a generalized moment
\begin{equation}
\label{g}
    \bE f(X) = \int_{\mR^n} f(x)p_{\mu,\Sigma}(x)\rd x=: g(\mu,\Sigma)
\end{equation}
of the vector $X$,  which is specified by a function $f: \mR^n \to \mR$, where
\begin{equation}
\label{p}
    p_{\mu,\Sigma}(x):= \frac{(2\pi)^{-n/2}}{\sqrt{\det \Sigma}}\re^{-\frac{1}{2}\|x-\mu\|_{\Sigma^{-1}}^2},
    \quad
    x \in \mR^n,
\end{equation}
is the corresponding  Gaussian PDF.
Here, $\|v\|_M:= \sqrt{v^{\rT} M v}$ is the Euclidean (semi-)norm of a vector $v$ associated with a real positive (semi-)definite symmetric matrix $M$. Differentiation of this PDF with respect to $\mu$ at a given but otherwise arbitrary point $x$ leads to
\begin{equation}
\label{dp/dmu}
    \d_{\mu}\ln p_{\mu,\Sigma}
    =
    \frac{\d_{\mu}p_{\mu,\Sigma}}{p_{\mu,\Sigma}}
    =
     \Sigma^{-1}(x-\mu).
\end{equation}
Similarly, the logarithmic Frechet derivative of the PDF $p_{\mu,\Sigma}$ with respect to the covariance matrix $\Sigma$ as an element of the Hilbert space $\mS_n$ of real symmetric matrices of order $n$ (endowed with the Frobenius inner product \cite{HJ_2007} of such matrices $\bra K,N\ket := \Tr(KN)$) is
\begin{align}
\nonumber
    \d_{\Sigma}\ln p_{\mu,\Sigma}
    & =
    \frac{\d_{\Sigma}p_{\mu,\Sigma}}{p_{\mu,\Sigma}}\\
\label{dp/dSigma}
    &=\frac{1}{2}\left( \Sigma^{-1}(x-\mu)(x-\mu)^{\rT}\Sigma^{-1} - \Sigma^{-1}\right).
\end{align}
Here, use is also made of the following Frechet derivatives:
\begin{equation}
\label{F12}
    \d_{\Sigma} \ln\det \Sigma = \Sigma^{-1},
    \qquad
    \d_{\Sigma} (\|v\|_{\Sigma^{-1}}^2) = - \Sigma^{-1}vv^{\rT}\Sigma^{-1},
\end{equation}
where $v\in \mR^n$ is a constant vector.
The integro-differential identities, which are known under the generic name of Price's theorem, are based on the following relations between the derivatives in (\ref{dp/dmu}) and (\ref{dp/dSigma}) and the gradient vector  and Hessian matrix of the Gaussian PDF in (\ref{p}) with respect to the state variables:
\begin{equation}
\label{dp/dx}
    \d_{\mu}p_{\mu,\Sigma} = -\d_x p_{\mu,\Sigma},
    \qquad
    \d_{\Sigma}p_{\mu,\Sigma} = \frac{1}{2}\d_x^2 p_{\mu,\Sigma}.
\end{equation}
Note that the first of these equalities is valid for an arbitrary differentiable (not necessarily Gaussian)  PDF  $p(x-\mu)$ which involves $\mu$ as a shift parameter.
However, the second equality in (\ref{dp/dx}) is a manifestation of the role which \emph{Gaussian} PDFs play as fundamental solutions of the heat (or diffusion) PDEs  \cite{E_1998,V_1971} for homogeneous anisotropic media:
\begin{equation}
\label{heat}
    \d_t u
    =
    \frac{1}{2} \Bra K, \d_x^2 u\Ket,
\end{equation}
where $K\in \mS_n$ is a positive definite matrix of thermal conductivity (or diffusivity), and $u(t,x)$ describes the temperature (respectively,  concentration) at time $t\> 0$ and location $x\in \mR^n$. The fundamental solution of this PDE is provided by the Gaussian PDF $p_{0, Kt}$ (the corresponding  Gaussian distribution  converges weakly to the atomic probability measure concentrated at the origin of $\mR^n$ as $t\to 0+$). The solution of the initial value problem for the PDE (\ref{heat}) with a continuous initial condition $u(0,\cdot)$ (growing not too fast at infinity) is described by the convolution of the latter with the heat kernel $p_{0,Kt}$:
$$
    u(t,x) = \int_{\mR^n} p_{0,Kt}(x-y) u(0,y)\rd y.
$$
Indeed, by letting $\Sigma := Kt$ for all $t>0$ (with the time derivative $\dot{\Sigma} = K$), and combining the chain rule for composite functions with the second of the equalities (\ref{dp/dx}), it follows that
$$
    \d_t p_{0,\Sigma}
    =
    \Bra
        \dot{\Sigma},
        \d_{\Sigma} p_{0,\Sigma}
    \Ket
    =
    \frac{1}{2}
    \Bra
        K,
        \d_x^2 p_{0,\Sigma}
    \Ket.
$$
Now, suppose the function $f$ in (\ref{g}) is twice continuously differentiable and there exists $\veps >0$ such that the Hessian matrix of $f$ satisfies
\begin{equation}
\label{decay}
    p_{\mu,\Sigma}\d_x^2 f=o(\re^{-\veps |x|^2}),
    \qquad
    x\to \infty.
\end{equation}
This condition at infinity ensures the convergence and parametric differentiability for the following improper integrals:
\begin{align}
\nonumber
    \d_{\mu}g(\mu,\Sigma)
    & =
    \int_{\mR^n} f(x)\d_{\mu}p_{\mu,\Sigma}(x)\rd x\\
\nonumber
    & =
    -\int_{\mR^n} f(x)\d_xp_{\mu,\Sigma}(x)\rd x\\
\label{Price1}
    & =
    \int_{\mR^n} \d_x f(x)p_{\mu,\Sigma}(x)\rd x
    =
    \bE\d_x f(X),\\
\nonumber
    \d_{\Sigma}g(\mu,\Sigma)
    & =
    \int_{\mR^n} f(x)\d_{\Sigma}p_{\mu,\Sigma}(x)\rd x\\
\nonumber
    & =
    \frac{1}{2}
    \int_{\mR^n} f(x)\d_x^2p_{\mu,\Sigma}(x)\rd x\\
\label{Price2}
    & =
    \frac{1}{2}
    \int_{\mR^n} \d_x^2 f(x)p_{\mu,\Sigma}(x)\rd x
    =
    \frac{1}{2}
    \bE \d_x^2 f(X),
\end{align}
where use is made of (\ref{dp/dx}) and the integration by parts.
Since, as mentioned before, the first equality in (\ref{dp/dx}) holds for arbitrary differentiable PDFs with a shift parameter $\mu$,  the identity (\ref{Price1}) does not essentially employ the Gaussian nature of $p_{\mu,\Sigma}$. However, (\ref{Price2}) is indeed specific for Gaussian PDFs and
implies that  the partial derivatives of the generalized moment $g(\mu,\Sigma)$ in (\ref{g}) with respect to the  entries $\sigma_{jk}$ of the covariance  matrix $\Sigma:= (\sigma_{jk})_{1\<j,k\< n}$ satisfy the integro-differential relations
\begin{align}
\label{jj}
    \d_{\sigma_{jj}} g(\mu,\Sigma)
    &=     \frac{1}{2}
    \bE \d_{x_j}^2 f(X),\\
\label{jk}
    \d_{\sigma_{jk}} g(\mu,\Sigma)
    & =
    \bE \d_{x_j}\d_{x_k} f(X),
    \quad
    1\< j\ne k\< n,
\end{align}
which constitute Price's theorem.
Note that the $\frac{1}{2}$-factor is absent from (\ref{jk}) due to the symmetry of the covariance matrix $\Sigma$ and the Hessian matrix  $\d_{x}^2 f$. Indeed, for any given indices $j\ne k$, the first variation of $g(\mu,\Sigma)$ with respect to $\sigma_{jk}$ is
\begin{align*}
    \d_{\sigma_{jk}} g \delta\sigma_{jk}
    &=
    \Bra
        \d_{\Sigma} g,
        \delta \Sigma
    \Ket
     =
    \frac{1}{2}
    \Bra
    \bE\d_x^2 f(X),
    \delta\Sigma
    \Ket\\
    & =
    \frac{1}{2}
    \Bra
    \bE\d_x^2 f(X),
    e_je_k^{\rT} + e_ke_j^{\rT}
    \Ket\delta \sigma_{jk}\\
    & =
    \frac{1}{2}
    \left(
        e_k^{\rT}\bE\d_x^2 f(X)e_j
        +
        e_j^{\rT}\bE\d_x^2 f(X)e_k
    \right)\delta \sigma_{jk}\\
    & =
    \bE\d_{x_j}\d_{x_k} f(X) \delta \sigma_{jk},
\end{align*}
which implies (\ref{jk}). Here,
 $\delta\Sigma = \delta\sigma_{jk} (e_je_k^{\rT} + e_ke_j^{\rT})$ is the corresponding variation of $\Sigma$, and $e_k$ denotes the $k$th standard basis vector in $\mR^n$. Therefore, if the function $f$ is $2m$ times continuously differentiable and, together with its partial derivatives up to order $2m-1$, satisfies (\ref{decay}), then repeated differentiation of (\ref{jk}) leads to
\begin{align}
\label{jjl}
    \d_{\sigma_{jj}}^{\ell} g(\mu,\Sigma)
    &=
    2^{-\ell}
    \bE \d_{x_j}^{2\ell} f(X),\\
\label{jkl}
    \d_{\sigma_{jk}}^{\ell} g(\mu,\Sigma)
    & =
    \bE \d_{x_j}^{\ell}\d_{x_k}^{\ell} f(X),
    \quad
    1\< j\ne k\< n,
\end{align}
for all $\ell = 1, \ldots, m$;
cf. \cite[Eq. (3)]{P_1958} and \cite[Eq. (5)]{M_1964}.
The identities (\ref{Price1})--(\ref{jkl}) allow the moment $g$ of the Gaussian random vector $X$ in (\ref{g}) to be found by using the other moments,  associated with the derivatives of the function $f$, which can be easier to compute (for example, if $f$ is a polynomial). More generally, if the function $f$ satisfies a linear PDE with constant coefficients, then a ``dual'' PDE can be derived for the function $g$. This, in principle, allows $g$ to be found as a solution to the boundary value problem for the dual PDE, where the boundary conditions can be established by using those values of the parameters of the Gaussian distribution for which the moment lends itself to a straightforward calculation, for example, from symmetry considerations.  We will now demonstrate this technique  (in a similar fashion to the use of PDEs  in the proof of the main theorem in \cite{B_1967}) for the moment-generating function of the Gaussian distribution:
\begin{equation}
\label{momgen}
    g(\mu,\Sigma) := \bE f(X),
    \qquad
        f(x):= \re^{\lambda^{\rT}x},
    \qquad
    x \in \mR^n,
\end{equation}
where $\lambda \in \mR^n$ is fixed but otherwise arbitrary.  The function $f$ satisfies the following PDEs
\begin{equation}
\label{fPDE}
    \d_x f = f\lambda,
    \qquad
    \d_x^2 f= f\lambda \lambda^{\rT}.
\end{equation}
Therefore, by applying Price's theorem (\ref{Price1}) and (\ref{Price2}) to (\ref{momgen}), and using (\ref{fPDE}), it follows that the function $g$ satisfies the dual PDEs
\begin{align}
\label{gPDE1}
    \d_{\mu} g
    &= \bE\d_x f(X) = \lambda \bE f(X) = g\lambda,\\
\label{gPDE2}
    \d_{\Sigma} g
    &= \frac{1}{2}\bE\d_x^2 f(X) = \frac{1}{2}\lambda\lambda^{\rT} \bE f(X) = \frac{1}{2}g\lambda\lambda^{\rT}.
\end{align}
Since the moment $g$ in (\ref{momgen}) takes positive values, the PDEs (\ref{gPDE1}) and (\ref{gPDE2}) are representable in an equivalent logarithmic form:
$$    \d_{\mu} \ln g
     =
    \lambda,
    \qquad
    \d_{\Sigma} \ln g
    =
    \frac{1}{2}\lambda\lambda^{\rT}.
$$
The right-hand sides of these two PDEs are independent of $\mu$ and $\Sigma$, and hence, their general solution $\ln g$ is an affine function of $\mu$ and $\Sigma$:
\begin{equation}
\label{logg}
    \ln g  = C + \lambda^{\rT}\mu + \frac{1}{2}\Bra \lambda\lambda^{\rT}, \Sigma\Ket
     = C + \lambda^{\rT}\mu + \frac{1}{2}\|\lambda\|_{\Sigma}^2.
\end{equation}
The additive constant $C$, which depends on $\lambda$,  is calculated as $C=\ln g(0,0) = 0$ from the boundary condition $g(0,0)= \bE\re^{\lambda^{\rT}0}=1$. The latter follows from the fact that the Gaussian random vector $X$ collapses to zero as $\mu=0$ and $\Sigma\to 0$. Therefore, (\ref{logg}) leads to the well-known expression for the moment generating function in (\ref{momgen}):
$$
    g(\mu,\Sigma) = \re^{\lambda^{\rT}\mu + \frac{1}{2}\|\lambda\|_{\Sigma}^2}.
$$
Also note that Price's theorem admits a dynamic formulation for a Gaussian random process $X$. In this setting, the mean vector $\mu$, the covariance matrix $\Sigma$ and the generalized moment $g$ in (\ref{g}) acquire dependence on time, and the total time derivative of $g$ can be computed by combining the chain rule with (\ref{Price1}) and (\ref{Price2}) as
\begin{equation}
\label{dynamo}
    (\bE f(X))^{^\centerdot}
    =
    \dot{\mu}^{\rT} \bE \d_x f(X) +
    \frac{1}{2}
    \Bra
        \dot{\Sigma},
        \bE \d_x^2 f(X)
    \Ket.
\end{equation}
In particular, this relation becomes a directly verifiable identity for quadratic functions $f$. Indeed, in this case, $\d_x^2 f$ is a constant matrix and $\d_x f$ is an affine function of $x$, which allows $\bE \d_xf(X)$ to  be expressed in terms of $\mu$. More precisely, if
\begin{equation}
\label{quadro}
    f(x) := \beta^{\rT} x + \frac{1}{2} x^{\rT} R x,
    \qquad
    x \in \mR^n,
\end{equation}
where $\beta:= (\beta_j)_{1\< j \< n} \in \mR^n$ is a constant vector and $R:=  (r_{jk})_{1\< j,k\< n}\in \mS_n$ is a constant matrix, then  (\ref{dynamo}) takes the form
$$
    \left(\beta^{\rT}\mu + \frac{1}{2}\left(\mu^{\rT}R \mu + \bra R, \Sigma\ket\right)\right)^{\centerdot}
    =
    \dot{\mu}^{\rT} (\beta + R \mu) +
    \frac{1}{2}
    \Bra
        \dot{\Sigma},
        R
    \Ket.
$$

\section{QUANTUM MECHANICAL VERSION OF PRICE'S THEOREM}\label{sec:quantum}

Now, let $X:= (X_k)_{1\< k \< n}$ be a vector of $n$ quantum  variables, which are self-adjoint operators on a complex separable  Hilbert space $\cH$ representing real-valued physical quantities \cite{M_1998,S_1994}. Suppose the quantum variables $X_1, \ldots, X_n$ satisfy the Heisenberg CCRs (on a dense domain in $\cH$):
\begin{equation}
\label{theta}
    [X_j, X_k] =
    2i
    \theta_{jk}\cI,
    \qquad
    1\< j,k\< n.
\end{equation}
Here, $[\xi,\eta]:= \xi\eta-\eta \xi$ is the commutator of operators, $i:= \sqrt{-1}$ is the imaginary unit,
and $\Theta:= (\theta_{jk})_{1\<j, k\< n}$ is a real antisymmetric matrix of order $n$ (we denote the subspace of such matrices in $\mR^{n\x n}$ by $\mA_n$). Also, $\cI$ denotes the identity operator on the space $\cH$, which carries out an appropriate ampliation  of entries of the CCR matrix $\Theta$ to linear operators on $\cH$ and will be omitted for brevity. A vector-matrix form of the CCRs (\ref{theta}) is
\begin{equation}
\label{Theta}
    [X,X^{\rT}]
    :=
    \left([X_j,X_k]\right)_{1\< j,k\< n}
     =
    XX^{\rT} - (XX^{\rT})^{\rT}
    =
    2i
    \Theta,
\end{equation}
where $\Theta$ represents $\Theta \ox \cI$, with $\ox$ the tensor product, and the transpose $(\cdot)^{\rT}$ acts on matrices of operators as if their entries were scalars. In particular, the CCRs hold for the quantum mechanical position $q$ and momentum $p$ operators \cite{M_1998} given by
\begin{equation}
\label{qp}
    q
    =
    \frac{\fa + \fa^{\dagger}}{\sqrt{2}},
    \qquad
    p:= -i\d_q
    =
    \frac{\fa - \fa^{\dagger}}{i\sqrt{2}}
\end{equation}
on a dense domain in the Hilbert space of square integrable  complex-valued  functions on the real line of positions,
where $\d_q$ is the partial derivative with respect to the position variable $q$, and units are chosen so that the reduced Planck constant is $\hslash = 1$. Here,
$\fa = \frac{q + \d_q}{\sqrt{2}}$ and $\fa^{\dagger}= \frac{q - \d_q}{\sqrt{2}}$ are the annihilation and creation operators \cite[pp. 90--91]{S_1994}  satisfying the CCR $[\fa, \fa^{\dagger}] = 1$, with $(\cdot)^{\dagger}$ the operator adjoint. Accordingly, $[q,p] = -i[q,\d_q] = i$, and hence, the CCR matrix of the position and momentum operators in (\ref{qp}) is $\frac{1}{2}\bJ$ in the sense that
\begin{equation}
\label{bJ}
    \left[
        {\small\begin{bmatrix}
            q\\
            p
        \end{bmatrix}},
        {\small\begin{bmatrix}
            q &
            p
        \end{bmatrix}}
    \right]
    :=
        {\small\begin{bmatrix}
            [q,q] & [q,p]\\
            [p,q] & [p,p]
        \end{bmatrix}}
    =
    i \bJ,
    \quad
    \bJ
    :=
    {\small\begin{bmatrix}
        0& 1\\
        -1 & 0
    \end{bmatrix}},
\end{equation}
cf. (\ref{Theta}). Note that the matrix $\bJ$
spans the subspace $\mA_2$, and $-i\bJ$ is the second Pauli matrix \cite{S_1994}. The problem of evaluating a function $f: \mR^n \to \mR$ at  the noncommutative quantum variables $X_1, \ldots, X_n$ is nontrivial even if $f$ is a polynomial. The Weyl quantization \cite{F_1989} endows $f(X)$ with the following meaning:
\begin{equation}
\label{f}
    f(X):= \int_{\mR^n} F(\lambda) \re^{i\lambda^{\rT}X}\rd \lambda.
\end{equation}
In this definition, $\lambda^{\rT} X = \sum_{k=1}^n \lambda_k X_k$ is a self-adjoint operator on the underlying Hilbert space $\cH$,  which is a linear combination of the operators $X_1, \ldots, X_n$ with real coefficients $\lambda_1, \ldots, \lambda_n$
comprising the vector $\lambda := (\lambda_k)_{1\< k \< n}\in \mR^n$. Also, $F: \mR^n \to \mC$ is the Fourier transform of the real-valued function $f$:
\begin{equation}
\label{F}
    F(\lambda):= (2\pi)^{-n} \int_{\mR^n} f(x)\re^{-i\lambda^{\rT}x}\rd x = \overline{F(-\lambda)},
\end{equation}
with $\overline{(\cdot)}$ the complex conjugate. Note that the quantum mechanical definition of $f(X)$ replaces the complex number $\re^{i\lambda^{\rT}x}$ of unit modulus with the unitary Weyl operator $\re^{i\lambda^{\rT} X}$. Therefore, in the case when the Fourier transform $F$ is absolutely integrable, (\ref{f}) can be understood as a Bochner integral \cite{Y_1980} whose value is a bounded operator on the space $\cH$ with the norm bound
$$
    \|f(X)\|
    \<
    \int_{\mR^n}|F(\lambda)|\rd \lambda.
$$
The quantum variables $X_1, \ldots, X_n$ are said to be in a Gaussian quantum state \cite{P_2010}
if the corresponding quantum quasi-characteristic function \cite{CH_1971} of the vector $X$ has the following form:
\begin{equation}
\label{charfun}
    \bE \re^{i \lambda^{\rT}X}
    =
    \re^{i\lambda^{\rT}\mu- \frac{1}{2}\lambda^{\rT} S \lambda}
    =
    \re^{i\lambda^{\rT}\mu- \frac{1}{2}\|\lambda\|_{\Sigma}^2},
    \qquad
    \lambda \in \mR^n,
\end{equation}
where $\bE\xi:= \Tr (\rho\xi)$ denotes the expectation of a quantum variable $\xi$ over the density  operator $\rho$ which specifies the quantum state \cite{H_2001}. Also, $\mu:= \bE X$ is the mean vector of $X$ as before, and $\Sigma$  is the real part of a complex Hermitian matrix $S$ which is the quantum covariance matrix of the vector $X$:
\begin{equation}
\label{S}
    S:= \cov(X) = \Sigma + i\Theta,
    \qquad
    \Sigma:= \Re S.
\end{equation}
The imaginary part $\Im S = \Theta$ is the CCR matrix from (\ref{Theta}) which (in contrast to $\Sigma$) does not depend on the density operator $\rho$. In view of the generalized Heisenberger uncertainty principle \cite{H_2001}, the matrix $S$ in (\ref{S}) is positive semi-definite. Now, suppose the underlying density operator $\rho$ is varied so as to yield Gaussian states with different values of the parameters $\mu$ and $\Sigma$, while the  vector $X$ of quantum  variables remains  unchanged. This quantum mechanical setting admits the following analogue of Price's theorem.

\begin{thm}
\label{th:QPT}
Suppose the Fourier transform $F$ of the  function $f: \mR^n\to \mR$ in (\ref{F}) satisfies a weighted integrability condition
\begin{equation}
\label{square}
    \int_{\mR^n} |F(\lambda)|\left(1+|\lambda|^2\right) \rd \lambda <+\infty
\end{equation}
(which is stronger than the absolute integrability of $F$).
Also, let the vector $X$ of quantum variables, satisfying the CCRs (\ref{Theta}), be in a Gaussian quantum state with a positive definite quantum covariance matrix $S$ in (\ref{S}). Then the generalized moments of $X$, specified by the Weyl quantizations of $f$, the  gradient vector $\d_x f$ and   the Hessian matrix $\d_x^2f$ in the sense of (\ref{f}), are related by
\begin{align}
\label{dd}
    \d_{\mu} \bE f(X)
     & =
    \bE \d_xf(X),\\
\label{QPT}
    \d_{\Sigma} \bE f(X)
     & =
    \frac{1}{2}
    \bE \d_x^2f(X),\\
\label{mix}
    \d_{\mu}^2 \bE f(X)
    & =
    2\d_{\Sigma} \bE f(X).
\end{align}
\end{thm}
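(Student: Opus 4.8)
The plan is to average the Weyl quantization (\ref{f}) over the Gaussian state so as to turn each generalized moment into an ordinary (scalar- or matrix-valued) integral over $\mR^n$ against the quasi-characteristic function (\ref{charfun}), and then to differentiate these integrals with respect to the parameters $\mu$ and $\Sigma$ under the integral sign; the weighted integrability condition (\ref{square}) is exactly what is needed to make the latter legitimate. To begin, since the density operator $\rho$ is trace class with $\Tr\rho = 1$, the expectation functional $\bE(\cdot) = \Tr(\rho\,\cdot)$ is bounded on bounded operators, while under (\ref{square}) the operator $\re^{i\lambda^{\rT}X}$ is unitary and $F$ absolutely integrable, so the Bochner integral (\ref{f}) converges in operator norm. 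Hence $\bE$ may be interchanged with the integral, and substituting (\ref{charfun}) --- in which only $\Sigma = \Re S$ appears, because $\Theta\in\mA_n$ forces $\lambda^{\rT}\Theta\lambda = 0$ --- yields
\[
    \bE f(X)
    =
    \int_{\mR^n} F(\lambda)\,
    \re^{i\lambda^{\rT}\mu - \frac{1}{2}\|\lambda\|_{\Sigma}^2}\,
    \rd\lambda ,
\]
a function of $\mu\in\mR^n$ and of the positive definite matrix $\Sigma$.

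Next I would identify the right-hand sides of (\ref{dd})--(\ref{QPT}). Integration by parts in (\ref{F}) shows that the Fourier transforms of $\d_x f$ and $\d_x^2 f$ are $i\lambda F(\lambda)$ and $-\lambda\lambda^{\rT}F(\lambda)$; these are absolutely integrable by (\ref{square}) together with the elementary bounds $|\lambda|, |\lambda|^2 \< 1+|\lambda|^2$, so $\d_x f(X)$ and $\d_x^2 f(X)$ are well-defined bounded operators in the sense of (\ref{f}). Averaging them exactly as above gives
\begin{align*}
    \bE\d_x f(X) &= \int_{\mR^n} i\lambda\, F(\lambda)\,\re^{i\lambda^{\rT}\mu - \frac{1}{2}\|\lambda\|_{\Sigma}^2}\,\rd\lambda,\\
    \bE\d_x^2 f(X) &= -\int_{\mR^n} \lambda\lambda^{\rT} F(\lambda)\,\re^{i\lambda^{\rT}\mu - \frac{1}{2}\|\lambda\|_{\Sigma}^2}\,\rd\lambda .
\end{align*}

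It then remains to differentiate the scalar representation of $\bE f(X)$. Since $\d_\mu(i\lambda^{\rT}\mu) = i\lambda$ and, by linearity of the map $\Sigma\mapsto \|\lambda\|_{\Sigma}^2 = \bra\lambda\lambda^{\rT},\Sigma\ket$, we have $\d_{\Sigma}\|\lambda\|_{\Sigma}^2 = \lambda\lambda^{\rT}$, the $\mu$-gradient, $\mu$-Hessian and $\Sigma$-derivative of the integrand $F(\lambda)\re^{i\lambda^{\rT}\mu - \frac{1}{2}\|\lambda\|_{\Sigma}^2}$ equal this integrand multiplied by $i\lambda$, $-\lambda\lambda^{\rT}$ and $-\frac{1}{2}\lambda\lambda^{\rT}$, respectively. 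All three are dominated, uniformly over $\mu\in\mR^n$ and over positive semi-definite $\Sigma$, by the integrable majorant $|F(\lambda)|\,(1+|\lambda|^2)$, because $|\re^{i\lambda^{\rT}\mu - \frac{1}{2}\|\lambda\|_{\Sigma}^2}| = \re^{-\frac{1}{2}\|\lambda\|_{\Sigma}^2} \< 1$. Hence differentiation under the integral sign is valid, and comparison with the formulas of the preceding paragraph gives (\ref{dd}) and (\ref{QPT}); moreover the $\mu$-Hessian integrand $-\lambda\lambda^{\rT}F(\lambda)\re^{i\lambda^{\rT}\mu - \frac{1}{2}\|\lambda\|_{\Sigma}^2}$ is precisely twice the $\Sigma$-derivative integrand, which yields (\ref{mix}).

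The only genuine obstacle is this analytic bookkeeping --- verifying that (\ref{square}) truly licenses passing $\bE$ through the Bochner integrals and differentiating twice under the integral sign uniformly in the parameters. Once these interchanges are in hand, the three identities follow at once from the same Fourier calculus that underlies the characteristic-function proof of the classical Price theorem recalled in Section~\ref{sec:classical}.
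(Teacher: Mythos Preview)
Your argument is correct and follows essentially the same route as the paper's own proof: both average the Weyl integral (\ref{f}) against the Gaussian quasi-characteristic function (\ref{charfun}), identify the Fourier transforms of $\d_x f$ and $\d_x^2 f$ as $i\lambda F(\lambda)$ and $-\lambda\lambda^{\rT}F(\lambda)$, and then differentiate in $\mu$ and $\Sigma$ under the integral sign. The only cosmetic difference is that you substitute (\ref{charfun}) first to obtain a single scalar integral and then differentiate, whereas the paper differentiates $\bE\re^{i\lambda^{\rT}X}$ first (invoking the shift identity (\ref{z}) for the $\mu$-derivative) and integrates against $F$ afterwards; you are also more explicit than the paper about the dominated-convergence majorant $|F(\lambda)|(1+|\lambda|^2)$ that legitimises these interchanges.
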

\begin{proof}
Similarly to the case of classical random variables discussed in Section~\ref{sec:classical}, the relation (\ref{dd}) follows from the identity
\begin{equation}
\label{z}
    \re^{i\lambda^{\rT}(X+z)} = \re^{i\lambda^{\rT}z} \re^{i\lambda^{\rT}X},
\end{equation}
which holds for all $\lambda,z \in \mR^n$, and the property that the mean vector  $\mu$ is a shift parameter of the Gaussian quantum state. Indeed, differentiation of (\ref{z}) with respect to the translation vector $z$ leads to
\begin{equation}
\label{trans}
    \d_{\mu}\bE \re^{i\lambda^{\rT}X} = \d_z\bE \re^{i\lambda^{\rT}(X+z)}\big|_{z=0} = i\lambda\bE \re^{i\lambda^{\rT}X},
\end{equation}
which can also be obtained directly from (\ref{charfun}). A combination of (\ref{f}) with (\ref{trans})  implies that
\begin{align}
\nonumber
    \d_{\mu} \bE f(X)
    & = \int_{\mR^n} F(\lambda)\d_{\mu}\bE\re^{i\lambda^{\rT}X}\rd \lambda\\
\nonumber
    & = i \int_{\mR^n} F(\lambda)\lambda \bE\re^{i\lambda^{\rT}X}\rd \lambda\\
\nonumber
    & = i \bE \int_{\mR^n} F(\lambda)\lambda \re^{i\lambda^{\rT}X}\rd \lambda = \bE \d_xf(X),
\end{align}
thus establishing (\ref{dd}). Here, use is also made of the Fourier transform
$$
    (2\pi)^{-n}\int_{\mR^n}\d_x f(x)\re^{-i\lambda^{\rT}x}\rd x = iF(\lambda) \lambda
$$
for the gradient $\d_x f$,
whereby the Weyl quantization of  $\d_xf(X)$ takes the form
$$
    \d_xf(X) = i \int_{\mR^n} F(\lambda)\lambda \re^{i\lambda^{\rT}X}\rd \lambda.
$$
The latter is a well-defined Bochner integral since the condition (\ref{square}) implies that $\int_{\mR^n}|F(\lambda)||\lambda|\rd \lambda < +\infty$.
We will now prove (\ref{QPT}).
Differentiation of the Gaussian cha\-rac\-teris\-tic function in (\ref{charfun}) with respect to the matrix $\Sigma$, satisfying $\Sigma \succ - i\Theta$ (such matrices $\Sigma$ form an open set in $\mS_n$), leads to
\begin{align}
\nonumber
    \d_{\Sigma}    \bE \re^{i \lambda^{\rT}X}
    & =
    \re^{i\lambda^{\rT}\mu}
    \d_{\Sigma}\re^{- \frac{1}{2}\|\lambda\|_{\Sigma}^2}\\
\label{dSigma}
    & =
    -\frac{1}{2}
    \re^{i\lambda^{\rT}\mu- \frac{1}{2}\|\lambda\|_{\Sigma}^2}
    \lambda\lambda^{\rT}
    =
    -\frac{1}{2}
    \lambda\lambda^{\rT}
    \bE \re^{i \lambda^{\rT}X}
\end{align}
for any fixed but otherwise arbitrary $\lambda \in \mR^n$.
Here, the Fre\-chet derivative $\d_{\Sigma} (\|\lambda\|_{\Sigma}^2) = \lambda \lambda^{\rT}$ corresponds to that  in the second of the relations  (\ref{F12}). Therefore, under the condition (\ref{square}), it follows from (\ref{f}) and (\ref{dSigma}) that
\begin{align}
\nonumber
    \d_{\Sigma} \bE f(X)
    & =
    \int_{\mR^n}
    F(\lambda) \d_{\Sigma} \bE \re^{i \lambda^{\rT}X}\rd \lambda\\
\label{dSigma1}
     & =
    -\frac{1}{2}
    \int_{\mR^n}
    F(\lambda) \lambda\lambda^{\rT}\bE \re^{i \lambda^{\rT}X}\rd \lambda.
\end{align}
The Fourier transform of the Hessian matrix $\d_x^2 f$ is representable as
\begin{equation}
\label{FHess}
    (2\pi)^{-n}\int_{\mR^n}\d_x^2 f(x)\re^{-i\lambda^{\rT}x}\rd x = -F(\lambda) \lambda\lambda^{\rT},
\end{equation}
and hence, the corresponding Weyl quantization of $\d_x^2f(X)$  is given by
\begin{equation}
\label{f''}
    \d_x^2f(X) = -\int_{\mR^n}F(\lambda) \lambda\lambda^{\rT}\re^{i\lambda^{\rT}X}\rd \lambda.
\end{equation}
The second integral in (\ref{dSigma1}) can be obtained by averaging that in (\ref{f''}), which leads to
$$
    \d_{\Sigma} \bE f(X)
     =
    -\frac{1}{2}
    \bE
    \int_{\mR^n}
    F(\lambda) \lambda\lambda^{\rT} \re^{i \lambda^{\rT}X}\rd \lambda\\
     =
    \frac{1}{2}
    \bE \d_x^2f(X),
$$
thus establishing (\ref{QPT}). Finally, by applying (\ref{dd}) twice and using (\ref{QPT}), it follows that
$$
    \d_{\mu}^2 \bE f(X) = \bE \d_x^2 f(X) = 2 \d_{\Sigma}\bE f(X),
$$
which proves the relation  (\ref{mix}) and completes the proof of the theorem.
\end{proof}

In establishing the quantum analogue of Price's theorem, we have essentially used the proof  \cite{B_1967,V_1999} of its original classical version based on Fourier transforms,  since the latter underlie the Weyl quantization. Note that quantum Price's theorem (\ref{QPT}) can, in principle,  be extended to the case, where the Fourier transform $F$ in (\ref{F}) is a generalized function \cite{V_2002}, with (\ref{f}) 
being understood in an appropriate distributional sense.  This includes (but is not limited to) the class of polynomials $f$. For example,    the quadratic function $f$ in (\ref{quadro}) has the following Fourier transform:
\begin{equation}
\label{Fquadro}
    F(\lambda)
    = i\sum_{j=1}^n\beta_j \d_{\lambda_j}\delta(\lambda)-\frac{1}{2}\sum_{j,k=1}^n r_{jk}\d_{\lambda_j}\d_{\lambda_k}\delta(\lambda),
\end{equation}
where $\delta(\cdot)$ is the $n$-dimensional Dirac delta function.
Here, for any $n$-index $\alpha:= (\alpha_k)_{1\< k\< n} \in \mZ_+^n$ (with $\mZ_+$ the set of nonnegative integers), the value of the generalized function $\d_{\lambda}^{\alpha}\delta(\lambda)$ at an $|\alpha|$ times continuously differentiable function $\lambda \mapsto \varphi(\lambda)$ is $(-1)^{|\alpha|}\d_{\lambda}^{\alpha}\varphi(0)$, where the standard multiindex conventions $|\alpha|:= \alpha_1 + \ldots + \alpha_n$ and $\d_{\lambda}^{\alpha}:= \d_{\lambda_1}^{\alpha_1}\ldots \d_{\lambda_n}^{\alpha_n}$ are used. In the noncommutative case being considered, the mixed partial derivatives of $\re^{i\lambda^{\rT}X}$ with respect to the entries of the vector $\lambda\in \mR^n$ can be calculated by using the factorization
\begin{align}
\nonumber
    \re^{i \lambda^{\rT} X}
      & =
    \rprod_{k=1}^n
        \re^{i\lambda_k X_k-\frac{1}{2}\big[\sum_{j=1}^{k-1} i\lambda_j X_j,\ i\lambda_k X_k\big]}\\
\label{Baker}
        & =
    \re^{i\sum_{1\< j< k \< n}\theta_{jk} \lambda_j\lambda_k}
    \rprod_{k=1}^n
    \re^{i\lambda_k X_k}
    =
    \re^{\frac{i}{2}\lambda^{\rT}\wt{\Theta}\lambda}
    \rprod_{k=1}^n
    \re^{i\lambda_k X_k}.
\end{align}
The latter is obtained by repeated  application of the CCRs (\ref{theta}), the bilinearity of the commutator, and the  Baker-Campbell-Hausdorff formula $\re^{\xi+\eta} = \re^{\xi} \re^{\eta} \re^{-\frac{1}{2}[\xi,\eta]} $ for operators $\xi$ and $\eta$ which commute with their commutator
\cite[pp. 128--129]{GZ_2004}. Also, $ \rprod_{k=1}^n \xi_k := \xi_1\x \ldots  \x \xi_n $ denotes the ordered product of operators $\xi_1, \ldots, \xi_n$, and the matrix $\wt{\Theta}:= (\wt{\theta}_{jk})_{1\< j,k\< n} \in \mS_n$ is given by
$$    \wt{\theta}_{kj}
    :=
    \wt{\theta}_{jk}
    :=
    \theta_{jk},
    \qquad
    1\< j\< k\< n,
$$
thus inheriting zero diagonal entries from the CCR matrix $\Theta$. In particular, the CCRs of the position and momentum operators in (\ref{bJ}) lead  to
$$
    \wt{\Theta} = \frac{1}{2}{\begin{bmatrix}0 & 1\\ 1 & 0\end{bmatrix}}.
$$
From the product structure of the right-hand side of (\ref{Baker}), it follows  that
\begin{equation}
\label{diff}
    \d_{\lambda}^{\alpha}\re^{i\lambda^{\rT}X}\big|_{\lambda=0}
    =
    \alpha!
    \sum_{\gamma \in \mZ_+^n:\, \gamma \< \alpha}
    \frac{i^{|\gamma|}}{\gamma! (\alpha-\gamma)!}
    \d_{\lambda}^{\alpha-\gamma}\re^{\frac{i}{2}\lambda^{\rT}\wt{\Theta}\lambda}\big|_{\lambda=0}X^{\gamma}
\end{equation}
for any $\alpha \in\mZ_+^n$. Here,
the inequality $\gamma\< \alpha$ applies entry-wise, $\alpha!:= \alpha_1!\x \ldots \x \alpha_n!$, and  $X^{\alpha}:= \rprod_{k=1}^n X_k^{\alpha_k}$. Substitution of (\ref{Fquadro}) into (\ref{f}) and application of (\ref{diff}) indeed leads to the quadratic function $f(X) = \beta^{\rT}X + \frac{1}{2}X^{\rT}RX$ of the quantum variables.

\section{ILLUSTRATIVE EXAMPLE: QUADRATIC-EXPONENTIAL MOMENTS }\label{sec:examples}

Consider a quadratic-exponential moment
for the vector $X$ of quantum variables from Section \ref{sec:quantum} in a Gaussian state with the mean vector $\mu$ and quantum covariance matrix $S$ in (\ref{S}):
\begin{equation}
\label{QEM}
  g(\mu,\Sigma,\Pi) = \bE f(X),
  \quad
  f(x):= \re^{-\frac{1}{2}x^{\rT}\Pi x},
  \qquad
  x\in \mR^n.
\end{equation}
Here, the matrix $\Pi \in \mS_n$ plays the role of a parameter. In addition to being of interest to quantum probability in their own right (see, for example,
\cite[pp. 274--276]{M_1995}), such moments (with $\Pi\prec 0$) are employed in the risk-sensitive dissipativity analysis and filtering/control  design  \cite{DDJW_2006,J_2005,VP_2012b,YB_2009} for quantum stochastic systems. The asymptotic behaviour of  the quadratic-exponential moment $g$ in (\ref{QEM}) for small matrices $\Pi$ is described by
\begin{equation}
\label{gas}
    g = 1 - \frac{1}{2}\left(\|\mu\|_{\Pi}^2 + \bra \Sigma, \Pi\ket\right) + o(\Pi),
    \qquad
    \Pi \to 0.
\end{equation}
Since $\re^{\xi}\succcurlyeq \cI + \xi$ for any self-adjoint operator $\xi$ on the underlying Hilbert space $\cH$, the affine part of (\ref{gas}) provides  a lower bound:
$$
    g \> \bE\Big(\cI - \frac{1}{2}X^{\rT}\Pi X\Big) = 1 - \frac{1}{2}\left(\|\mu\|_{\Pi}^2 + \bra \Sigma, \Pi\ket\right).
$$
In the case $\Pi\succ 0$, the Fourier transform $F$ of the function $f$ in (\ref{QEM}) is a Gaussian PDF from (\ref{p}) with zero mean and covariance matrix $\Pi$:
\begin{align}
\nonumber
    F(\lambda)
    & = (2\pi)^{-n}\int_{\mR^n} \re^{-\frac{1}{2}\|x\|_{\Pi}^2-i\lambda^{\rT}x}\rd x\\
\label{FPi}
    & =
    \frac{(2\pi)^{-n/2}}{\sqrt{\det \Pi}}\re^{-\frac{1}{2}\|\lambda\|_{\Pi^{-1}}^2} = p_{0,\Pi}(\lambda),
    \qquad
    \lambda \in \mR^n,
\end{align}
and hence, $F$ satisfies the assumption (\ref{square}) of Theorem~\ref{th:QPT}. In the framework of the Weyl quantization for $f(X)$, the quadratic-exponential moment $g$ in (\ref{QEM}) can be computed by substituting (\ref{FPi}) into (\ref{f}) and using the Gaussian characteristic function (\ref{charfun}):
\begin{align}
\nonumber
    \bE \re^{-\frac{1}{2}X^{\rT}\Pi X}
    & =
    \int_{\mR^n}
    p_{0,\Pi}(\lambda)
        \bE \re^{i\lambda^{\rT}X}
    \rd \lambda\\
\nonumber
    & =
    \frac{(2\pi)^{-n/2}}{\sqrt{\det \Pi}}
    \int_{\mR^n}
    \re^{i\lambda^{\rT}\mu-\frac{1}{2}\|\lambda\|_{\Pi^{-1}+\Sigma}^2}\,
    \rd \lambda\\
\label{ggg}
     & =
    \frac{\re^{-\frac{1}{2}\|\mu\|_{\Psi}^2}    }{\sqrt{\det(I_n + \Sigma \Pi)}},
\end{align}
where $I_n$ denotes the identity matrix of order $n$, and $\Psi\in \mS_n$ is an auxiliary matrix given by
\begin{equation}
\label{Psi}
    \Psi:= (\Pi^{-1}+\Sigma)^{-1} = \Pi (I_n+ \Sigma\Pi)^{-1}.
\end{equation}
The closed-form representation of the moment $g$ in (\ref{ggg}) allows the relation (\ref{mix}) of Theorem~\ref{th:QPT} to be verified directly  as
$$
    \d_{\mu}^2 g = g (\Psi \mu \mu^{\rT} \Psi - \Psi) = 2 \d_{\Sigma}g,
$$
because a combination of the identities (\ref{F12}) with (\ref{Psi}) implies that
\begin{align*}
    \d_{\Sigma} \ln\det (I_n + \Sigma \Pi) & = \Psi,\\
    \d_{\Sigma} (\|\mu\|_{\Psi}^2) & = -\Psi \mu\mu^{\rT}\Psi.
\end{align*}
Note that the calculations in (\ref{ggg})  substantially rely on the Weyl quantization of the quadratic-exponential function $f$ in (\ref{QEM}) under the assumption that $\Pi\succ 0$. The result would be different if the power series $\sum_{k=0}^{+\infty}\frac{1}{k!}(-\frac{1}{2}X^{\rT}\Pi X)^k$ were used instead.

\section{CONCLUSION}\label{sec:conc}

We have revisited  Price's classical theorem for generalized moments of jointly Gaussian random variables  by using a unified apparatus of the  Frechet differentiation with respect to covariance matrices. By combining the quantum quasi-characteristic functions with Fourier transforms, we have shown that similar integro-differential identities hold for expectations of Weyl quantization integrals evaluated at quantum variables satisfying Heisenberg CCRs in Gaussian states. The quantum mechanical version of Price's theorem involves the Frechet derivative of the generalized moment of such variables with respect to the real part of their quantum covariance matrix. We have considered an illustrative example of the quadratic-exponential moments. 
The techniques, which have been used in this paper,  are applicable to computing nonlinear performance criteria for linear quantum stochastic systems, such as those in risk-sensitive quantum control problems.

%


\begin{thebibliography}{99}


\bibitem{B_1967}
J.L.Brown, A Generalized form of Price's theorem and its converse,
\emph{IEEE Trans. Inform. Theory}, vol. 13, no. 1, 1967, pp. 27--30.
\bibitem{CH_1971}
C.D.Cushen, and R.L.Hudson, A quantum-mechanical central limit theorem,
\emph{J. Appl. Prob.}, vol. 8, no. 3, 1971, pp. 454--469.
\bibitem{DDJW_2006}
C.D'Helon, A.C.Doherty, M.R.James, and S.D.Wilson,
Quantum risk-sensitive control,
Proc. 45th IEEE CDC,
San Diego, CA, USA, December 13--15, 2006, pp. 3132--3137.
\bibitem{E_1998}
L.C.Evans,
\textit{Partial Differential Equations},
American Mathematical Society, Providence, 1998.
\bibitem{F_1989}
G.B.Folland, \emph{Harmonic Analysis in Phase Space}, Princeton University Press, Princeton, 1989.

\bibitem{GZ_2004}
C.W.Gardiner, and P.Zoller,
\textit{Quantum Noise}.
Springer, Berlin, 2004.
\bibitem{H_2001}
A.S.Holevo, \textit{Statistical Structure of Quantum Theory}, Springer, Berlin, 2001.
\bibitem{HJ_2007}
R.A.Horn, and C.R.Johnson,
\textit{Matrix Analysis},
Cambridge
University Press, New York, 2007.

\bibitem{J_2005}
M.R.James, A quantum Langevin formulation of risk-sensitive optimal control,
\emph{J. Opt. B}, vol. 7, 2005, pp. S198--S207.
%

%
\bibitem{M_1964}
E.McMahon, An extension of Price's theorem, \emph{IEEE Trans. Inform. Theory}, vol. 10, no. 2, 1964, p. 168.
\bibitem{M_1998}
E.Merzbacher,
\textit{Quantum Mechanics}, 3rd Ed.,
Wiley, New York, 1998.
\bibitem{M_1995}
P.-A.Meyer, \emph{Quantum Probability for Probabilists}, Springer, Berlin, 1995. 

%
%
\bibitem{P_1965}
A.Papoulis, Comments on `An extension of Price's theorem' by McMahon, E.L.,
\emph{IEEE Trans. Inform. Theory},
vol. 11, no. 1,  1965, p. 154.

\bibitem{P_1992}
K.R.Parthasarathy,
\textit{An Introduction to Quantum Stochastic Calculus},
Birk\-h\"{a}user, Basel, 1992.
\bibitem{P_2010}
K.R.Parthasarathy,
What is a Gaussian state?
\textit{Commun. Stoch. Anal.}, vol. 4, no. 2, 2010, pp. 143--160.

\bibitem{P_2010}
I.R.Petersen,
Quantum linear systems theory,
Proc. 19th Int. Symp. Math. Theor. Networks Syst., Budapest, Hungary, July 5--9, 2010, pp.  2173--2184.

\bibitem{P_1958}
R.Price, A useful theorem for nonlinear devices having Gaussian inputs,
\emph{IRE Trans. Inform. Theory}, vol. 4, no. 2, 1958, pp. 69--72.
\bibitem{RA_1978}
E.B.Rockower,  and N.B.Abraham,
Calculating generating functions from characteristic
functions, with application to quantum optics,
\emph{J. Phys. A: Math. Gen.}, vol. 11, no. 10, 1978, pp. 1879--1884.
%

\bibitem{S_1994}
J.J.Sakurai,
\textit{Modern Quantum Mechanics},
 Addison-Wesley, Reading, Mass., 1994.



%

%

\bibitem{S_2008}
D.W.Stroock,
Partial differential equations for probabilists,
Cambridge University Press, Cambridge, 2008.

\bibitem{V_1999}
A.van den Bos, Nonlinear statistical signal processing: useful theorems and their application, Proc. IEEE-EURASIP Workshop on Nonlinear Signal and Image Processing (NSIP'99), Antalya, Turkey, June 20--23, 1999, pp. 603--606.




%
\bibitem{V_1971}
V.S.Vladimirov,
\textit{Equations of Mathematical Physics},
M.Dekker,
New York, 1971.
\bibitem{V_2002}
V.S.Vladimirov,
\textit{Methods of the Theory of Generalized Functions},
Taylor \& Francis, London, 2002.

\bibitem{VP_2012a}
I.G.Vladimirov, and I.R.Petersen,
Gaussian stochastic  linearization for open quantum systems
using quadratic approximation of Hamiltonians, 	Proc. MTNS 2012, Melbourne, Victoria, 9--13 July 2012,
(preprint: arXiv:1202.0946v1 [quant-ph], 5 February 2012).
\bibitem{VP_2012b}
I.G.Vladimirov, and I.R.Petersen,
Risk-sensitive dissipativity of linear quantum stochastic systems
under Lur'e type perturbations of Hamiltonians, 	Proc. AUCC 2012, Sydney, Australia, 15--16 November 2012, pp. 247--252 (preprint:  	
arXiv:1205.3566v1 [quant-ph], 16 May 2012).

\bibitem{YB_2009}
N.Yamamoto, and L.Bouten,
Quantum risk-sensitive estimation and robustness,
\emph{IEEE Trans. Automat. Contr.}, vol. 54, no. 1, 2009, pp. 92--107.
\bibitem{Y_1980}
K.Yosida, \emph{Functional Analysis}, 6th Ed., Springer, Berlin, 1980.

\end{thebibliography}
\end{document}